\newtheorem{theorem}{Theorem}
\newtheorem{corollary}{Corollary}
\newtheorem{lemma}{Lemma}
\newcommand{\supp}{\operatorname{supp}}
\begin{document}

\title[]{Riesz energy on the torus: \\ regularity of minimizers} \keywords{}
\subjclass[2010]{31C20, 52C35, 74G65, 82B05}

\thanks{The research of J.L. was supported in part by the National Science Foundation under award DMS-1454939.}

\author[]{Jianfeng Lu}
\address[Jianfeng Lu]{Department of Mathematics, Department of Physics, and Department of Chemistry,
Duke University, Box 90320, Durham NC 27708, USA}
\email{jianfeng@math.duke.edu}

\author[]{Stefan Steinerberger}
\address[Stefan Steinerberger]{Department of Mathematics, Yale University, New Haven, CT 06510, USA}
\email{stefan.steinerberger@yale.edu}

\begin{abstract}  We study sets of $N$ points on the $d-$dimensional torus $\mathbb{T}^d$ minimizing interaction functionals of the type
\[
\sum_{i, j =1 \atop i \neq j}^{N}{ f(x_i - x_j)}.
\]
The main result states that for a class of functions $f$ that behave
like Riesz energies $f(x) \sim \|x\|^{-s}$ for $ d-2 < s < d$, the minimizing
configuration of points can be expected to have optimal regularity w.r.t.~a
Fourier-analytic regularity measure that arises in the study of
irregularities of distribution. A particular consequence is that they
are optimal quadrature points in the space of trigonometric
polynomials up to a certain degree. The proof extends to other
settings and also covers less singular functions such as
$f(x) = \exp\bigl(- N^{\frac{2}{d}} \|x\|^2 \bigr)$.
\end{abstract}
\maketitle

\section{Introduction and Main Result}
\subsection{Introduction.} This paper studies the regularity of
minimizers of variational problems. More precisely, for a function
$f:\mathbb{T}^d \rightarrow \mathbb{R}$ we will be interested in
configuration of $N$ points
$\left\{x_1, x_2, \dots, x_N \right\} \subset \mathbb{T}^d$ that
minimize the energy functional
$$ \sum_{i \neq j}{ f(x_i - x_j)}. $$
These questions have a long and rich history: the choice
$f(x_i -x_j) = \|x_i - x_j\|^{-1}$ on $\mathbb{S}^2$ is often
interpreted as the minimal energy configuration of $N$ electrons on a
sphere and dates back to the physicist J. J. Thomson \cite{thomson} in
1904. Minimizers are `roughly' evenly spaced and what remains to be
understood are fine structural details of the minimizing
configuration. These questions are very relevant in mathematical
physics (cf. Abrikosov lattices \cite{abri}). Since the field is
extremely active, it has become increasingly difficult to summarize
existing results, we refer to the survey of Blanc \& Lewin
\cite{blanc} for an introduction into the crystallization conjecture,
to a recent survey of Brauchart \& Grabner \cite{brauchart} for an
introduction to general problems of these type and to recent lecture
notes of Serfaty \cite{serf}.

\subsection{Measuring Regularity.} When studying the regularity of
minimizers, the predominant measures are usually phrased in local
terms: for example, is it true that
$\min_{i \neq j}\|x_i - x_j\| \gtrsim N^{-1/d}$? Since the minimizers
are assumed to be extremely regular and perhaps even close to
lattices, it is reasonable to believe that this is indeed the case;
the first results in this direction are due to Dahlberg
\cite{dahlberg} and this has inspired many subsequent results. The
purpose of our paper is to point out a particular
nonlocal measure of regularity that can be applied to be this problem.
Given a set of $N$ points on the torus
$\left\{x_1, \dots, x_N\right\} \subset \mathbb{T}^d $, we can quantify regularity by the size of the Fourier
coefficients of the sum of Dirac measures placed in these points (since we work on the torus, the Fourier grid is given by $\mathbb{Z}^d$, which we assume without explicit mentioning in the sequel). The quantity we will study is given by
$$  \sum_{\|k \| \leq X \atop k \neq 0}{ \left| \widehat{ \sum_{i=1}^{N}{ \delta_{x_i}}}(k) \right|^2} = \sum_{\|k\| \leq X \atop k \neq 0}{ \left| \sum_{n=1}^{N}{ e^{2 \pi i \left\langle k, x_n \right\rangle}}\right|^2},$$
where $X > 0$ is a free parameter.
It is not new and played a prominent role in the $L^2-$theory of irregularities of distribution, we refer to the seminal work of Beck \cite{beck1,beck2,beck3} and Montgomery \cite{montf, mont1, mont}.
 There is a fundamental inequality of Montgomery \cite{mont1, mont} (see also the refinement \cite{stein1}) that
states that this quantity cannot be too small. More precisely, for all point sets $\left\{x_1, \dots, x_N \right\} \subset \mathbb{T}^d$ and $X \in \mathbb{N}$
\begin{equation} \label{eq:montgomery}
\sum_{\|k\| \leq X \atop k \neq 0}{\left| \sum_{n=1}^{N}{ e^{2 \pi i \left\langle k, x_n \right\rangle}}\right|^2 } \geq N X^d - N^2.
\end{equation}
This inequality is sharp up to constants for sets of points satisfying a separation condition $\|x_i - x_j\| \gtrsim N^{-1/d}$ (see \cite{stein1}). While, conversely, its validity does not imply $\sim N^{-1/d}$ separation,
it does imply global regularity results at scales slightly coarser than that of nearest neighbors (see \S 1.4. for a precise statement).

\subsection{Main Results.} We can now state our main result.
In the formulation of the main result, one should imagine $f$ to grow like
$$ f(x) \sim \|x\|^{-s} \qquad \mbox{for some}~0< s < d.$$
This is not required but coincides with the way the quantities would scale naturally. We first state
the result and then discuss its assumptions in greater detail.
\begin{theorem} Let $f:\mathbb{T}^d \rightarrow \mathbb{R}$ be given and assume
there exists positive $c_1, c_2$ such that for all $x \in \mathbb{T}^d$, $k \in \mathbb{Z}^d$
\begin{equation} \label{eq:condition}
   \frac{c_1}{1 + \|k\|^{d-s}} \leq \widehat{f}(k) \leq \frac{c_2}{1 + \|k\|^{d-s}}
\end{equation}
as well as, for all $t > 0$ and all $x \in \mathbb{T}^d$,
\begin{equation} \label{eq:condition3}
 \int_{\mathbb{T}^d}{ f(x - y) \left[ e^{t\Delta} \delta_{0} \right] (y) dy} \leq f(x) + c_2 t \left| (\Delta f)(y)\right|.
\end{equation}
Suppose furthermore that $\left\{x_1, \dots, x_N\right\} \subset \mathbb{T}^d$ satisfies, 
\begin{equation}
 \sum_{i \neq j}{f(x_i - x_j)} \leq N^2 \int_{\mathbb{T}^d}{\int_{\mathbb{T}^d}{f(x-y) dx}dy} + E.
\end{equation}
Then, for every $c_3 >0$, we have
$$ \sum_{\|k\| \leq c_3 N^{1/d} \atop k \neq 0}{\left| \sum_{n=1}^{N}{ e^{2 \pi i \left\langle k, x_n \right\rangle}}\right|^2 } \lesssim_{c_1, c_2, c_3} N^{1 - \frac{s}{d}}E +  N^{\frac{d-(s+2)}{d}} \sum_{i,j=1 \atop i \neq j}^{N}{\left|  (\Delta f)(x_i - x_j)\right| }.$$
\end{theorem}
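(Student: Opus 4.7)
The plan is to heat-smooth the interaction kernel at a scale matched to the frequency cutoff and then apply Plancherel. Concretely, set $t = (c_3 N^{1/d})^{-2}$, so that the multiplier $e^{-t\|k\|^2}$ stays bounded below by $e^{-1}$ throughout the frequency window $\|k\| \leq c_3 N^{1/d}$. Writing $\mu = \sum_{i=1}^{N} \delta_{x_i}$, Plancherel applied to the mollified kernel $e^{t\Delta}f$ yields the identity
\[
\sum_{i,j=1}^{N}(e^{t\Delta}f)(x_i - x_j) \;=\; \sum_{k \in \mathbb{Z}^d} \widehat{f}(k)\, e^{-t\|k\|^2}\, \lvert\widehat{\mu}(k)\rvert^2.
\]

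I would first isolate from this identity both the $k=0$ contribution (equal to $N^{2}\widehat{f}(0)$) and the $i=j$ diagonal (equal to $N(e^{t\Delta}f)(0)$), recasting it as a bound on $\sum_{0<\|k\|\leq c_3 N^{1/d}}\lvert\widehat{\mu}(k)\rvert^{2}$ in terms of the off-diagonal smoothed energy. Next I would apply hypothesis \eqref{eq:condition3} pointwise to each difference $x_i-x_j$, replacing $(e^{t\Delta}f)(x_i-x_j)$ by $f(x_i-x_j)+c_2 t\lvert(\Delta f)(x_i-x_j)\rvert$, and combine with the energy hypothesis $\sum_{i\neq j}f(x_i-x_j)\leq N^{2}\widehat{f}(0)+E$ to reach
\[
\sum_{k\neq 0}\widehat{f}(k)\,e^{-t\|k\|^{2}}\,\lvert\widehat{\mu}(k)\rvert^{2} \;\leq\; E + c_2 t\sum_{i\neq j}\lvert(\Delta f)(x_i-x_j)\rvert + N\,(e^{t\Delta}f)(0).
\]
The last step uses the lower bound $\widehat{f}(k)\gtrsim N^{-(1-s/d)}$ for $\|k\|\leq c_3 N^{1/d}$ from \eqref{eq:condition}, together with $e^{-t\|k\|^{2}}\geq e^{-1}$ on the same range, to invert the prefactor of $\lvert\widehat{\mu}(k)\rvert^{2}$ and conclude; the claimed power $N^{(d-s-2)/d}$ in front of the $\Delta f$ sum arises from $N^{1-s/d}\cdot t\asymp N^{(d-s-2)/d}$.

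I expect the main obstacle to be the self-energy contribution $N(e^{t\Delta}f)(0)$, which is absent from the stated bound. Using the upper bound in \eqref{eq:condition} directly, $(e^{t\Delta}f)(0)=\sum_{k}\widehat{f}(k)e^{-t\|k\|^{2}}\lesssim t^{-s/2}\sim N^{s/d}$, so after the prefactor inversion this term contributes $N^{2-s/d}\cdot N^{s/d}=N^{2}$, exactly the scale of the Montgomery lower bound \eqref{eq:montgomery}. To reconcile, I would argue that the energy hypothesis is only nonvacuous when $E$ is at least of the natural minimum-excess order $N^{1+s/d}$ for Riesz-type $f$, so that $N^{1-s/d}E\gtrsim N^{2}$ already absorbs the self-energy; as a fallback, one may symmetrize the Parseval step by replacing the bare $\mu$ with a mollified version $\mu\ast p_{t/2}$ on both sides so that the $i=j$ diagonal becomes subdominant in $t$.
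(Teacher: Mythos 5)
Your proposal is correct and follows essentially the same route as the paper: heat mollification at scale $t \sim N^{-2/d}$ (your single smoothing of $f$ equals the paper's two-sided smoothing of the empirical measure up to $t \mapsto 2t$, by self-adjointness of $e^{t\Delta}$), the Plancherel identity, de-mollifying the off-diagonal terms via condition \eqref{eq:condition3}, and inverting $\widehat{f}(k)e^{-t\|k\|^2} \gtrsim N^{-(d-s)/d}$ on the window $\|k\| \lesssim N^{1/d}$. Even your handling of the self-energy term $N(e^{t\Delta}f)(0) \lesssim N^{1+s/d}$ --- absorbing it using the known minimal-excess scaling $E \gtrsim N^{1+s/d}$ --- is precisely the step the paper takes (``Moreover, we have $E \gtrsim N^{1+s/d}$'').
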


Condition \ref{eq:condition} mirrors the classical scaling of the Riesz kernel in $\mathbb{R}^d$. 
Condition \ref{eq:condition3} is essentially controlling convexity and
is fairly easy to satisfy: the scaling is exactly what is required by a Taylor expansion up to second order which can be seen as follows. The heat kernel applied to
a delta mass in the origin up to time $t$ essentially yields a Gaussian at scale $\sim \sqrt{t}$. This means that we would expect
$$  \int_{\mathbb{T}^d}{ f(x - y) \left[ e^{t\Delta} \delta_{0} \right] (y) dy} \sim \frac{1}{\mathbb{S}^{d-1}} \int_{\mathbb{S}^{d-1}}{ f( x + \sqrt{t} z) d\sigma(z)}.$$
A Taylor expansion up to second order suggests
$$  \frac{1}{\mathbb{S}^{d-1}} \int_{\mathbb{S}^{d-1}}{ f( x + \sqrt{t} z) d\sigma(z)} \sim f(x) + t (\Delta f)(x).$$
We will show below, in Lemma 1, that for the classical Riesz potential this estimate actually degenerates in the range $\|x\|^2 \lesssim t$ and that stronger results hold.
The result is most interesting in the range $d-2 < s < d$. It is known (see e.g. \cite{hardin}) that the error term for a near-optimal point configuration scales like
$$ E \sim N^{1 + \frac{s}{d}}.$$

One naturally expects that if
$$ f(x) \sim \frac{1}{\|x\|^{s}} \quad \mbox{then} \quad  \Delta f \sim \frac{1}{\|x\|^{s+2}}$$
which moves the power out of the potential-theoretic regime. This implies
$$ \sum_{i,j=1 \atop i \neq j}^{N}{\left|  (\Delta f)(x_i - x_j)\right| } \sim \sum_{i,j=1 \atop i \neq j}^{N}{ \frac{1}{\|  x_i - x_j \|^{s+2} } } \sim N^{\frac{d + s+2}{d}}$$
which would then imply
$$ \sum_{\|k\| \leq c_4 N^{1/d} \atop k \neq 0}{\left| \sum_{n=1}^{N}{ e^{2 \pi i \left\langle k, x_n \right\rangle}}\right|^2 } \lesssim N^2$$
and would be optimal by Montgomery's Lemma. It seems reasonable to assume that this last statement actually holds for all $0 < s < d$
and all optimizing point configurations but it is not a surprise that the regime $d-2 < s < d$ is slightly better behaved. We do not know
how one would go about establishing the result for other values of $s$. However, the statement that optimal point configurations 
are maximally separated actually implies the exponential sum estimate -- it is therefore strictly simpler and may provide a valuable
stepping stone.
The approach also suggests another natural question.

\begin{quote}
\textbf{Open Problem.}  Let $0< s,t < d$ and $\left\{x_1, \dots, x_N\right\} \subset \mathbb{T}^d$. Does
$$\sum_{i \neq j}{\frac{1}{\|x_i - x_j\|^s}} \leq N^2 \frac{1}{|\mathbb{T}^d|}\int_{\mathbb{T}^d}{\frac{1}{\|x\|^s} dx} + c_1 N^{1+\frac{s}{d}}$$
imply
$$  \sum_{i \neq j}{\frac{1}{\|x_i - x_j\|^t}} \leq N^2 \frac{1}{|\mathbb{T}^d|}\int_{\mathbb{T}^d}{\frac{1}{\|x\|^t} dx} + c_1 N^{1+\frac{t}{d}}?$$
\end{quote}
The question is clearly one possible way of encapsulating the natural notion that minimizing configurations are close to lattices.

\subsection{Less singular kernels.}

 The idea behind the proof is so general that the method is not restricted to these interaction energies;
the following result is another application of this approach.
\begin{theorem}[see \cite{stein1}] Every set of points $\left\{x_1, \dots, x_N\right\} \subset \mathbb{T}^d$ satisfying
 $$ \sum_{i, j =1 }^{N}{ \exp\left( - N^{\frac{2}{d}} \|x_i - x_j\|^2 \right)} \leq c_1 N $$
satisfies
$$ \sum_{\|k\| \leq c_2 N^{1/d} \atop k \neq 0}{\left| \sum_{n=1}^{N}{ e^{2 \pi i \left\langle k, x_n \right\rangle}}\right|^2 } \lesssim_{c_1, c_2} N^2.$$
\end{theorem}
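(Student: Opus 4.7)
The plan is to translate the hypothesis from physical space to Fourier space via the Parseval identity for $\sum_{i,j} f(x_i - x_j)$, and then exploit the fact that a Gaussian has everywhere-positive Fourier coefficients. Expanding $f$ in Fourier series and interchanging sums yields the standard formula
\[
\sum_{i,j=1}^N f(x_i-x_j) \;=\; \sum_{k \in \mathbb{Z}^d} \widehat{f}(k)\, \Bigl|\sum_{n=1}^N e^{2\pi i \langle k, x_n\rangle}\Bigr|^2,
\]
valid for any sufficiently nice $f$ on $\mathbb{T}^d$. For $f(x) = \exp\bigl(-N^{2/d}\|x\|^2\bigr)$, the Fourier transform of the Gaussian on $\mathbb{R}^d$ combined with Poisson summation (to account for the passage to $\mathbb{T}^d$, which contributes only super-exponentially small corrections) gives
\[
\widehat{f}(k) \;\asymp\; N^{-1} \exp\bigl(-\pi^2 \|k\|^2 / N^{2/d}\bigr).
\]
The crucial feature is that every $\widehat{f}(k)$ is strictly positive.

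Next I would observe that on the low-frequency range $\|k\| \leq c_2 N^{1/d}$, the exponential factor is bounded below by $e^{-\pi^2 c_2^2}$, so $\widehat{f}(k) \gtrsim_{c_2} N^{-1}$. Using positivity to discard all other frequencies and then inserting the hypothesis,
\[
c_1 N \;\geq\; \sum_{k \in \mathbb{Z}^d} \widehat{f}(k)\, \Bigl|\sum_{n=1}^N e^{2\pi i \langle k, x_n\rangle}\Bigr|^2 \;\gtrsim_{c_2}\; N^{-1} \sum_{\|k\| \leq c_2 N^{1/d}} \Bigl|\sum_{n=1}^N e^{2\pi i \langle k, x_n\rangle}\Bigr|^2.
\]
Rearranging gives $\sum_{\|k\| \leq c_2 N^{1/d}} |\widehat{\mu}(k)|^2 \lesssim_{c_1, c_2} N^2$, where $\mu = \sum_n \delta_{x_n}$. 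Subtracting the $k=0$ contribution $|\widehat{\mu}(0)|^2 = N^2$ then yields the desired bound on the nonzero-frequency sum.

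There is no real obstacle; the entire argument boils down to the observation that the Gaussian is its own (positive) Fourier transform, so the hypothesis is literally a Parseval-weighted sum in which every weight in the range $\|k\| \lesssim N^{1/d}$ is comparable to $N^{-1}$. This is what makes the scaling $N^{2/d}$ in the exponent so natural: it calibrates the Gaussian weight to transition from ``heavy'' to ``light'' precisely at the Montgomery threshold $\|k\| \sim N^{1/d}$.
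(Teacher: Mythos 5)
Your proposal is correct and is essentially the paper's own proof in different notation: the paper writes $\exp\bigl(-N^{2/d}\|x_i-x_j\|^2\bigr)\approx \pi^{d/2}N^{-1}\bigl[e^{t\Delta}\delta_{x_i}\bigr](x_j)$ with $t\sim N^{-2/d}$ and sums over $i,j$, which is exactly your Parseval identity with the positive Gaussian weights $\widehat{f}(k)\asymp N^{-1}e^{-\pi^2\|k\|^2/N^{2/d}}$, followed by the same lower bound $\widehat{f}(k)\gtrsim_{c_2}N^{-1}$ on the range $\|k\|\le c_2N^{1/d}$. The only hygiene point is to run the argument with the periodized Gaussian $F(x)=\sum_{m\in\mathbb{Z}^d}e^{-N^{2/d}\|x-m\|^2}$, whose Fourier coefficients are exactly positive, rather than the torus-metric Gaussian itself (for which positivity at very high frequencies is not obvious); since $F\le f+O(e^{-cN^{2/d}})$ pointwise, the hypothesis transfers to $F$ with only a negligible loss.
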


This result, first proved in \cite{stein1}, was what originally motivated our interest in this problem.
If the point set is well-distributed (i.e. $\|x_i - x_j\| \gtrsim N^{-1/d}$ whenever $i \neq j$), then the assumption is easily seen to hold and the conclusion
follows; the interesting part is that any set of points minimizing this Gaussian interaction functional will necessarily behave like a set
of well-separated points w.r.t. this Fourier-analytic regularity measure.
\begin{center}
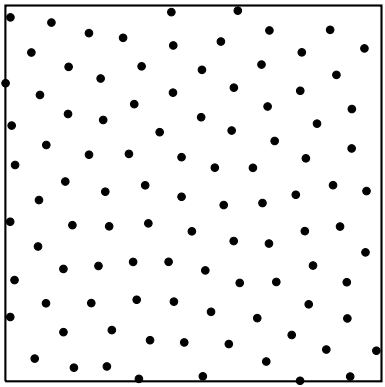
\begin{figure}[h!]
\begin{tikzpicture}[scale=5]
\draw [thick] (0,0) -- (1,0) -- (1,1) -- (0,1) -- (0,0);
\filldraw (0.86344, 0.935215) circle (0.01cm);
\filldraw (0.761266, 0.123258) circle (0.01cm);
 \filldraw (0.312914, 0.914013) circle (0.01cm);
 \filldraw (0.282995, 0.136372) circle (0.01cm);
 \filldraw (0.520495, 0.702722) circle (0.01cm);
 \filldraw (0.889789, 0.411789) circle (0.01cm);
 \filldraw (0.38007, 0.420294) circle (0.01cm);
 \filldraw (0.828529, 0.685654) circle (0.01cm);
 \filldraw (0.0166807, 0.680519) circle (0.01cm);
 \filldraw (0.817869, 0.308049) circle (0.01cm);
 \filldraw (0.957418, 0.343145) circle (0.01cm);
 \filldraw (0.54677, 0.184955) circle (0.01cm);
 \filldraw (0.178159, 0.415532) circle (0.01cm);
 \filldraw (0.693605, 0.0526631) circle (0.01cm);
 \filldraw (0.715936, 0.639526) circle (0.01cm);
 \filldraw (0.441267, 0.982378) circle (0.01cm);
 \filldraw (0.222136, 0.92647) circle (0.01cm);
 \filldraw (0.468352, 0.596433) circle (0.01cm);
 \filldraw (0.580453, 0.469036) circle (0.01cm);
 \filldraw (0.788105, 0.875356) circle (0.01cm);
 \filldraw (0.607445, 0.781374) circle (0.01cm);
 \filldraw (0.025742, 0.575613) circle (0.01cm);
 \filldraw (0.607019, 0.373251) circle (0.01cm);
 \filldraw (0.275914, 0.412401) circle (0.01cm);
\filldraw (0.572954, 0.904129) circle (0.01cm);
 \filldraw (0.669615, 0.168117) circle (0.01cm);
 \filldraw (0.907688,  0.263678) circle (0.01cm);
 \filldraw (0.247459, 0.306925) circle (0.01cm);
 \filldraw (0.683555, 0.474396) circle (0.01cm); 
\filldraw (0.697176, 0.731277) circle (0.01cm);
 \filldraw (0.0779352, 0.0604842) circle (0.01cm);
 \filldraw (0.880141, 0.815375) circle (0.01cm);
 \filldraw (0.339601, 0.317686) circle (0.01cm); 
\filldraw (0.0129042, 0.171138) circle (0.01cm);
 \filldraw (0.68092, 0.842863) circle (0.01cm); 
\filldraw (0.522623, 0.828822) circle (0.01cm);
\filldraw (0.122374, 0.954634) circle (0.01cm);
 \filldraw (0.871111, 0.521851) circle (0.01cm);
 \filldraw (0.259988, 0.695398) circle (0.01cm);
 \filldraw (0.62305, 0.261789) circle (0.01cm);
 \filldraw (0.475526, 0.103445) circle (0.01cm);
 \filldraw (0.920645, 0.619608) circle (0.01cm);
 \filldraw (0.384579, 0.10938) circle (0.01cm); 
\filldraw (0.000487339, 0.793214) circle (0.01cm);
 \filldraw (0.06903, 0.874934) circle (0.01cm);
 \filldraw (0.108716, 0.628824) circle (0.01cm);
 \filldraw (0.354928, 0.00653162) circle (0.01cm); 
\filldraw (0.44633, 0.893699) circle (0.01cm);
 \filldraw (0.269775, 0.039552) circle (0.01cm);
 \filldraw (0.495805, 0.39909) circle (0.01cm);
 \filldraw (0.0867259,0.35875) circle (0.01cm); 
\filldraw (0.916754, 0.0126726) circle (0.01cm);
 \filldraw (0.594041, 0.0992317) circle (0.01cm);
 \filldraw (0.012737, 0.424625) circle (0.01cm);
 \filldraw (0.18215, 0.0362856) circle (0.01cm); 
\filldraw (0.921154, 0.724508) circle (0.01cm);
 \filldraw (0.154361, 0.130826) circle (0.01cm);
 \filldraw (0.986316, 0.0813935) circle (0.01cm); 
\filldraw (0.85324, 0.0845723) circle (0.01cm);
 \filldraw (0.448039,0.212008) circle (0.01cm);
 \filldraw (0.524963, 0.0132488) circle (0.01cm);
 \filldraw (0.0133802, 0.968421) circle (0.01cm);
 \filldraw (0.41018, 0.66293) circle (0.01cm);
 \filldraw (0.168091, 0.836753) circle (0.01cm);
 \filldraw (0.960149, 0.506272) circle (0.01cm); 
\filldraw (0.154239, 0.299023) circle (0.01cm); 
\filldraw (0.772125, 0.49642) circle (0.01cm);
 \filldraw (0.222319, 0.602971) circle (0.01cm);
 \filldraw (0.531493, 0.295142) circle (0.01cm);
 \filldraw (0.265543, 0.504527) circle (0.01cm);
 \filldraw (0.909212, 0.167291) circle (0.01cm);
 \filldraw (0.72026, 0.264402) circle (0.01cm);
 \filldraw (0.228343, 0.208178) circle (0.01cm);
 \filldraw (0.796125, 0.399643) circle (0.01cm);
 \filldraw (0.658114, 0.568048) circle (0.01cm);
 \filldraw (0.601584, 0.66728) circle (0.01cm); 
\filldraw (0.783496, 0.00141272) circle (0.01cm);
 \filldraw (0.433911, 0.317973) circle (0.01cm);
 \filldraw (0.159228, 0.53165) circle (0.01cm); 
\filldraw (0.328435, 0.605131) circle (0.01cm); 
\filldraw (0.349086,  0.217074) circle (0.01cm); 
\filldraw (0.107937, 0.207642) circle (0.01cm);
 \filldraw (0.617916, 0.986284) circle (0.01cm);
 \filldraw (0.362148, 0.838149) circle (0.01cm);
 \filldraw (0.445535, 0.768103) circle (0.01cm);
 \filldraw (0.700766, 0.366559) circle (0.01cm);
 \filldraw (0.806463, 0.204902) circle (0.01cm);
 \filldraw (0.0918888, 0.761954) circle (0.01cm);
 \filldraw (0.0242553, 0.269427) circle (0.01cm);
 \filldraw (0.468364, 0.491061) circle (0.01cm);
 \filldraw (0.783985, 0.773086) circle (0.01cm);
 \filldraw (0.253241, 0.805743) circle (0.01cm);
 \filldraw (0.37167, 0.521583) circle (0.01cm);
 \filldraw (0.16651, 0.71135) circle (0.01cm);
 \filldraw (0.798919, 0.593356) circle (0.01cm);
 \filldraw (0.701959, 0.933628) circle (0.01cm);
 \filldraw (0.342613, 0.737541) circle (0.01cm);
 \filldraw (0.0892862, 0.48232) circle (0.01cm);
 \filldraw (0.556831, 0.568482) circle (0.01cm); 
 \filldraw (0.954616, 0.885814) circle (0.01cm);
\end{tikzpicture}
\caption{A local minimizer of the energy functional  $f(x-y) = \exp(-N^{\frac{2}{d}}\|x-y\|^2)$ on $\mathbb{T}^2$ (picture taken from \cite{stein3}).} 
\end{figure}
\end{center}

\subsection{Consequence for Regularity} The purpose of this section is to discuss implications of such a regularity statement: usually, such statements are given in terms of purely
spatial properties (i.e. point separation, points being spread out, the empirical measure converging weakly to the uniform measure etc). Here, the regularity property is phrased on
the Fourier side; we discuss implications for the spatial properties as well as their properties when used in numerical integration. \\

\textit{1. Integration error.} While there is a vast literature on using minimizing configurations of interacting energy as sample points for quadrature, our argument provides a direct result
 for $L^2-$functions with compact support in frequency space.
\begin{corollary} We have the identity
$$ \sup_{\supp(\widehat f) \subset B(0,X) } \left\| \int_{\mathbb{T}^d}{f(x) dx} - \frac{1}{N} \sum_{n=1}^{N}{f(x_n)}\right\|_{L^2} =  \frac{ \|f\|_{L^2}}{N}\left(\sum_{\|k\| \leq X \atop k \neq 0}{ \left| \sum_{n=1}^{N}{ e^{2 \pi i \left\langle k, x_n \right\rangle}}\right|^2}\right)^{\frac{1}{2}}.$$
As a consequence, minimal configuration of an interacting energy functional like in Theorem 1 will have optimal error rates for the numerical integration of functions
$$\left\{f \in L^2(\mathbb{T}^d): \supp(\widehat f) \subset B(0,cN^{1/d}) \right\}.$$
\end{corollary}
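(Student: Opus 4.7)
The plan is to expand $f$ in a Fourier series, which makes both the integral and the quadrature sum explicit, and then the identity will reduce to Cauchy--Schwarz together with the fact that equality in Cauchy--Schwarz can be attained by a specific choice of coefficients. The point is that the constraint $\supp(\widehat f) \subset B(0,X)$ turns the problem into a finite-dimensional optimization.

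More precisely, I would first write, for any $f$ with $\supp(\widehat{f}) \subset B(0,X)$,
\[
f(x) = \sum_{\|k\| \leq X} \widehat{f}(k)\, e^{2\pi i \langle k,x\rangle},
\]
so that $\int_{\mathbb{T}^d} f\, dx = \widehat{f}(0)$ and
\[
\frac{1}{N}\sum_{n=1}^N f(x_n) = \frac{1}{N}\sum_{\|k\|\leq X} \widehat{f}(k)\sum_{n=1}^N e^{2\pi i \langle k,x_n\rangle}.
\]
Setting $S(k) = \sum_{n=1}^N e^{2\pi i\langle k,x_n\rangle}$ and using $S(0)=N$, the $k=0$ terms cancel and the quadrature error becomes
\[
\int_{\mathbb{T}^d} f\, dx - \frac{1}{N}\sum_{n=1}^N f(x_n) = -\frac{1}{N}\sum_{\substack{\|k\|\leq X \\ k\neq 0}} \widehat{f}(k)\, S(k).
\]

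Next, apply Cauchy--Schwarz to the right-hand side, writing
\[
\left|\frac{1}{N}\sum_{\substack{\|k\|\leq X \\ k\neq 0}}\widehat{f}(k)\, S(k)\right| \leq \frac{1}{N}\Bigl(\sum_{\substack{\|k\|\leq X \\ k\neq 0}}|\widehat{f}(k)|^2\Bigr)^{1/2}\Bigl(\sum_{\substack{\|k\|\leq X \\ k\neq 0}}|S(k)|^2\Bigr)^{1/2},
\]
and bound the first factor by $\|f\|_{L^2}$ via Parseval, yielding the claimed inequality $\leq$.

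Equality (and hence the identity) is then attained by choosing the extremizer: let $\widehat{f}(0)=0$ and $\widehat{f}(k) = \overline{S(k)}$ for $0<\|k\|\leq X$, and zero otherwise; this $f$ clearly has the required support, and a direct computation shows equality in Cauchy--Schwarz and saturation of Parseval. The corollary regarding optimal quadrature error for bandlimited functions then follows by combining this identity with the bound on the exponential sum provided by Theorem 1 for the appropriate choice of $X \sim N^{1/d}$, matching the lower bound $NX^d - N^2$ from Montgomery's inequality \eqref{eq:montgomery}. There is no real obstacle here — the whole argument is essentially Parseval plus the sharpness of Cauchy--Schwarz — but the cleanest presentation requires being careful that the $k=0$ mode drops out so that only high-frequency coefficients of $f$ appear, which is what makes the $\|f\|_{L^2}$ bound sharp.
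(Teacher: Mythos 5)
Your proposal is correct and follows essentially the same route as the paper: Fourier expansion with cancellation of the $k=0$ mode against the integral, Cauchy--Schwarz with Parseval, and attainment of equality by the dual choice $\widehat{f}(k) = \overline{S(k)}$ for $0 < \|k\| \leq X$ (the paper phrases this as $L^2$-duality). Your explicit construction of the extremizer and the remark that sharpness of the Parseval step requires $\widehat{f}(0)=0$ make the argument, if anything, slightly more detailed than the paper's.
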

\begin{proof}
The argument follows quickly from an expansion in Fourier series: the frequency $k=0$ cancels with the integral
\begin{align*}
 \frac{1}{N} \sum_{i=1}^{N}{f(x_i)}  - \int_{\mathbb{T}^d}{f(x) dx}    &=   \frac{1}{N} \sum_{n=1}^{N} \sum_{ \|k\| \leq X}{\widehat{f}(k) e^{2\pi i \left\langle k, x_n\right\rangle}} -  \int_{\mathbb{T}^d}{f(x) dx}  \\
                                                                      &=\sum_{\|k\| \leq X \atop k \neq 0}{\widehat{f}(k)   \frac{1}{N} \sum_{n=1}^{N}  e^{2\pi i \left\langle k, x_n\right\rangle}} 
\end{align*}
A simple application of the Cauchy-Schwarz inequality yields
\begin{align*}
  \sum_{\|k\| \leq X \atop k \neq 0}{\widehat{f}(k)   \frac{1}{N} \sum_{n=1}^{N}  e^{2\pi i \left\langle k, x_n\right\rangle}}  &\leq \biggl(\sum_{\|k\| \leq X \atop k \neq 0}{|\widehat{f}(k)|^2}\biggr)^{\frac{1}{2}} \left(\sum_{\|k\| \leq X \atop k \neq 0}{  \left| \frac{1}{N} \sum_{n=1}^{N}  e^{2\pi i \left\langle k, x_n\right\rangle} \right|^2 } \right)^{\frac{1}{2}} \cr
  &= \frac{ \|f\|_{L^2}}{N}\left(\sum_{\|k\| \leq X \atop k \neq 0}{ \left| \sum_{n=1}^{N}{ e^{2 \pi i \left\langle k, x_n \right\rangle}}\right|^2}\right)^{\frac{1}{2}}.
\end{align*}
Equality for the supremum then follows from $L^2-$duality (i.e. picking $\widehat{f}(k)$ so as to obtain equality in the Cauchy-Schwarz inequality). The fact that this is optimal up to constants follows from \cite{stein1}.
\end{proof}

\textit{2. Spatial regularity.} Almost orthogonality to low-frequency trigonometric polynomials has obvious implications for spatial properties. We refer especially to
classical discrepancy theory where usually the other route is highlighted: notions of irregularity of distribution are bounded from above by Fourier-analytic quantities; we refer
especially to Erd\H{o}s-Turan-Koksma inequality and the books \cite{dick, drmota, kuipers, mont} for an overview. Here we shall merely focus on the notion of $L^2-$discrepancy and highlight the connection to our result; for simplicity, we restrict our attention to the two-dimensional case $d=2$. Let $S \subset \mathbb{T}^2$ be a measurable set and define the discrepancy of a point set $\left\{x_1, \dots, x_N\right\} \subset \mathbb{T}^2$ with respect to $S$ as
$$ D(S) = \# \left\{1 \leq i \leq N: x_i \in S\right\} - N |S|.$$
For any fixed $S$ this quantity may be quite small (say $\leq 1$) even if the points are distributed in a rather irregular fashion. It therefore makes sense to consider the quantity over an entire family of sets and one fairly canonical approach is to simply consider all possible translations of $S$ and define
$$d_{S}(x) := D(S + x) = D(\left\{s + x: s \in S\right\}).$$
 A simple computation shows that we can express the averaged square error in terms of Fourier coefficients as
$$ \int_{\mathbb{T}^2}{ d_{S}(x)^2 dx} = \sum_{k \neq 0}{ |\widehat{\chi_{S}}(k)|^2  \left| \sum_{n=1}^{N}{ e^{2\pi i \left\langle k, x_n\right\rangle}} \right|^2},$$
where $\chi_{S}$ is the characteristic function of the set $S$.
\begin{center}
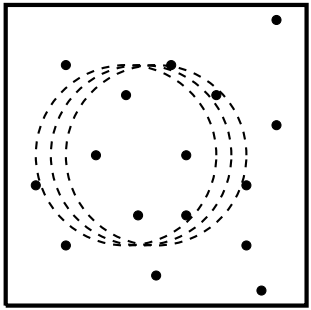
\begin{figure}[h!]
\begin{tikzpicture}[scale=4]
\draw [ultra thick] (0,0) -- (1,0) -- (1,1) -- (0,1) -- (0,0);
\filldraw (0.1, 0.4) circle (0.015cm);
\filldraw (0.2, 0.2) circle (0.015cm);
\filldraw (0.2, 0.8) circle (0.015cm);
\filldraw (0.3, 0.5) circle (0.015cm);
\filldraw (0.4, 0.7) circle (0.015cm);
\filldraw (0.44, 0.3) circle (0.015cm);
\filldraw (0.5, 0.1) circle (0.015cm);
\filldraw (0.55, 0.8) circle (0.015cm);
\filldraw (0.6, 0.5) circle (0.015cm);
\filldraw (0.8, 0.4) circle (0.015cm);
\filldraw (0.7, 0.7) circle (0.015cm);
\filldraw (0.6, 0.3) circle (0.015cm);
\filldraw (0.9, 0.6) circle (0.015cm);
\filldraw (0.9, 0.95) circle (0.015cm);
\filldraw (0.85, 0.05) circle (0.015cm);
\filldraw (0.8, 0.2) circle (0.015cm);
\draw [thick, dashed] (0.4, 0.5) circle (0.3cm);
\draw [thick, dashed] (0.45, 0.5) circle (0.3cm);
\draw [thick, dashed] (0.5, 0.5) circle (0.3cm);
\end{tikzpicture}
\caption{$L^2-$discrepancy: moving a shape over the points and integrating the square of the difference between the number of points inside the shape and the expected number.}
\end{figure}
\end{center}
This quantity depends on both the behavior of the Fourier transform of the characteristic function of $S$ as well as the behavior of the
exponential sum; our result guarantees that the sum over the second term up to frequencies $\| k \| \lesssim N^{1/d}$ will be small, which is the best possible
behavior one could ask of that expression.\\

In this spirit, we introduce another natural regularity measure:
instead of taking the discrepancy with respect to the characteristic
function of a set, we may define discrepancy with respect to a
localized measure; more precisely, given a set of points
$\left\{x_1, \dots, x_N\right\} \subset \mathbb{T}^d$, a parameter
$t> 0$ and a fixed point $x \in \mathbb{T}^d$, we can define the
\textit{heat discrepancy} via
$$ d_{t,\Delta}(x) := \sum_{n=1}^{N}{ \left[e^{t \Delta}\delta_x\right](x_n)} - N.$$
Here and henceforth, $e^{t \Delta}$ denotes the heat propagator defined via
$$ e^{t \Delta}f = \sum_{k \in \mathbb{Z}^d}{e^{-t \|k\|^2} \widehat{f}(k) e^{2 \pi i \left\langle k, x \right\rangle}}.$$
It corresponds to the solution of the heat equation at time $t/(4\pi^2)$ and can be simultaneously understood
as a Fourier multiplier and mollification operator.
Note that $e^{t \Delta}\delta_x$ in particular may be understood as, roughly, a
Gaussian centered at $x$ at carrying most of its mass at scale
$\sim t^{1/2}$. Moreover, the function $e^{t \Delta} \delta_x$ is
scaled so as to have total integral $1$: subtracting $N$ then yields
a function that has integral 0 when integrated over $\mathbb{T}^d$ and
its deviation from $0$ (\textit{i.e.}, its $L^2-$norm) serves as a
natural measure of irregularity.  We would expect this function to be
roughly at order $\sim N$ in most points for $t \sim N^{-\frac{2}{d}}$
independently of the set of points and then to become more regular as
$t$ increases. We will show in the next Corollary that the point sets
for Riesz-type potentials as in Theorem 1 have good regularity
properties with respect to the heat discrepancy.

\begin{corollary} We have, for any set of points,
$$ \int_{\mathbb{T}^d}{ d_{N^{-\frac{2}{d}},\Delta}(x)^2 dx} \gtrsim N^2 .$$
If $\left\{x_1, \dots, x_N\right\} \subset \mathbb{T}$ is an admissible point set in the sense of satisfying the assumptions of Theorem 1 and if, additionally,
$$|\Delta f| \lesssim \frac{|f|}{\|x\|^2} \quad \mbox{and} \quad  \sum_{i \neq j}{f(x_i - x_j)} \leq N^2 \int_{\mathbb{T}^d}{\int_{\mathbb{T}^d}{f(x-y) dx}dy} + c N^{1 + \frac{s}{d}},$$
 then
$$ \int_{\mathbb{T}^d}{ d_{t,\Delta}(x)^2 dx} \lesssim_{\alpha, d}   \frac{(\log{N})^{d-s}}{t^{\frac{d-s}{2}}} \left(  t \sum_{i \neq j}{\frac{1}{\|x_i - x_j\|^{s+2}}} + N^{1+\frac{s}{d}} \right).$$
\end{corollary}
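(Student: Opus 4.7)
The plan is to use Parseval to reduce both parts to an exponential-sum expression. Observing that $[e^{t\Delta}\delta_x](y) = g_t(y-x)$ for the heat kernel $g_t(y) = \sum_k e^{-t\|k\|^2}e^{2\pi i\langle k,y\rangle}$ on $\mathbb{T}^d$, and noting that the $k=0$ mode of $\sum_n g_t(\,\cdot\, - x_n)$ cancels the subtracted $N$ in the definition of $d_{t,\Delta}$, Parseval's identity gives
\begin{equation*}
\int_{\mathbb{T}^d} d_{t,\Delta}(x)^2\,dx \;=\; \sum_{k \neq 0} e^{-2t\|k\|^2} \Bigl|\sum_{n=1}^N e^{2\pi i \langle k, x_n\rangle}\Bigr|^2.
\end{equation*}

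For the lower bound, I would specialize to $t = N^{-2/d}$ and restrict the sum to $\|k\| \leq c N^{1/d}$ with $c^d = 2$. On that range the Gaussian factor is bounded below by the absolute constant $e^{-2c^2}$, and Montgomery's inequality \eqref{eq:montgomery} yields
\begin{equation*}
\int_{\mathbb{T}^d} d_{N^{-2/d},\Delta}(x)^2\,dx \;\geq\; e^{-2c^2}\bigl(N(cN^{1/d})^d - N^2\bigr) \;=\; e^{-2c^2} N^2 \;\gtrsim\; N^2.
\end{equation*}

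For the upper bound I would instead use the spatial reformulation
\begin{equation*}
\int_{\mathbb{T}^d} d_{t,\Delta}(x)^2\,dx \;=\; \bigl(N g_{2t}(0) - N^2\bigr) + \sum_{i \neq j} g_{2t}(x_i - x_j)
\end{equation*}
and control each piece by standard torus heat-kernel estimates. Poisson summation gives $g_{2t}(0) \lesssim t^{-d/2}$, so the diagonal piece is at most $Nt^{-d/2}$; a direct algebraic comparison shows this is dominated by $t^{-(d-s)/2}N^{1+s/d}$ provided $t \gtrsim N^{-2/d}$, which is the nontrivial regime. For the off-diagonal piece, combine the pointwise bound $g_{2t}(x) \lesssim t^{-d/2} e^{-c\|x\|^2/t}$ with the elementary $e^{-u} \leq C_k u^{-k}$ applied at $k = (s+2)/2$ to obtain the Riesz-type majorization
\begin{equation*}
g_{2t}(x) \;\lesssim\; t^{(s+2-d)/2}\,\|x\|^{-(s+2)},
\end{equation*}
whence $\sum_{i \neq j} g_{2t}(x_i - x_j) \lesssim t^{(s+2-d)/2} \sum_{i \neq j} \|x_i - x_j\|^{-(s+2)}$, which matches the first target term exactly.

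The route above in fact delivers the stated inequality \emph{without} the $(\log N)^{d-s}$ factor, which appears to be slack. An alternative argument closer in spirit to Theorem 1 would split the frequency sum at $R \sim \sqrt{\log N/t}$, apply Theorem 1 on $\|k\| \leq R$ with the cutoff parameter $c_3 = R/N^{1/d}$ of size $\sqrt{\log N}$, and absorb the tail via $e^{-2tR^2} = N^{-O(1)}$ together with the trivial estimate $|\sum_n e^{2\pi i \langle k,x_n\rangle}|^2 \leq N^2$; the $(\log N)^{d-s}$ would then emerge from the $c_3$-dependence of Theorem 1's implicit constant. The main technical step in either route is the conversion of Gaussian into algebraic decay via the bound $g_{2t}(x) \lesssim t^{(s+2-d)/2}\|x\|^{-(s+2)}$, which is precisely what allows the hypothesis $|\Delta f| \lesssim |f|/\|x\|^2$ to transfer the right-hand side into the form $\sum_{i\neq j}\|x_i-x_j\|^{-(s+2)}$ that appears in the target.
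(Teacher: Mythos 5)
Your proposal is correct, and while your lower bound is exactly the paper's argument (Plancherel plus Montgomery's inequality at $t = N^{-2/d}$ --- with the welcome extra care of taking the cutoff $cN^{1/d}$ with $c^d = 2$, since at $X = N^{1/d}$ exactly Montgomery's bound $NX^d - N^2$ degenerates to $0$, a point the paper glosses over), your upper bound takes a genuinely different route. The paper stays in frequency space: it splits the sum at $\|k\| = X$, inserts the Fourier lower bound $\widehat{f}(k) \gtrsim (1+\|k\|)^{-(d-s)} \gtrsim X^{-(d-s)}$ to compare the low-frequency part with the heat-mollified energy $\langle e^{(t/2)\Delta}\sum_i \delta_{x_i} * f, e^{(t/2)\Delta}\sum_i \delta_{x_i}\rangle$, bounds that energy by $N^{1+s/d} + t\sum_{i\neq j}\|x_i - x_j\|^{-(s+2)}$ using the near-minimality hypothesis, the convexity condition together with Lemma 1, and $|\Delta f| \lesssim |f|\,\|x\|^{-2}$, and finally absorbs the tail $\|k\| \geq X$ via the trivial bound $N^2$ and the choice $X = c\,t^{-1/2}\log N$; the $(\log N)^{d-s}$ factor arises precisely from this truncation. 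You instead work entirely in physical space: the identity $\int_{\mathbb{T}^d} d_{t,\Delta}^2\,dx = N g_{2t}(0) - N^2 + \sum_{i\neq j} g_{2t}(x_i - x_j)$ (your $e^{-2t\|k\|^2}$ is the literal Plancherel weight; the paper's $e^{-t\|k\|^2}$ is an immaterial relabeling of $t$), the torus Gaussian bound $g_{2t}(x) \lesssim t^{-d/2} e^{-c\|x\|^2/t}$, and $e^{-u} \lesssim_k u^{-k}$ at $k = (s+2)/2$ yield the conclusion directly, with no frequency truncation --- hence no logarithm --- and, notably, without ever invoking $f$, the Fourier lower bound, or the energy hypotheses: your bound is hypothesis-free and strictly stronger than the stated one. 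This is coherent rather than suspicious: the heat discrepancy already carries its own Gaussian weights, so a pointwise kernel bound suffices, whereas Theorem 1 targets an unweighted sharp-cutoff frequency sum and genuinely needs the detour through $\widehat{f}$; the paper's proof of the corollary simply recycles that machinery at the cost of the $\log$ factor and the extra assumptions. Two small caveats, neither fatal and both shared implicitly by the paper's own proof: the regime is really $N^{-2/d} \lesssim t \lesssim 1$ (you need $t \gtrsim N^{-2/d}$ for $Nt^{-d/2} \leq t^{-(d-s)/2}N^{1+s/d}$, and the Gaussian upper bound for $g_{2t}$ on the torus, with $\|x\|$ the torus distance, holds for bounded $t$), and in Montgomery's inequality $X$ should be an integer, so take $X = \lceil (2N)^{1/d} \rceil$.
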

This shows the emergence of some degree of regularity in the regime $s \leq d - 2$
for $t$ large. The inequality raises a natural question that may be new: is it the case
that points that have small $s-$Riesz energy naturally also have a small $t-$Riesz
energy for $t > s$? Since one would expect minimizing configurations to be close to
lattices, this does not seem unreasonable.
The question of obtaining upper bounds on the heat discrepancy is similar in spirit to recent results
\cite{chatterjee, serf} and references therein on \emph{rigidity} of
minimizing point configurations (the points are more regularly
distributed than i.i.d.~random).

\subsection{Open questions.} These results motivate many questions.

\begin{enumerate}
\item We do not know whether
  $f(x) = \|x\|^{-s}$ restricted on torus for $0 < s < d$ satisfies
  the assumptions of the Theorem 1 (the missing property that would need
  to be established being $\widehat{f}(k) \gtrsim (1+\|k\|)^{d-s}$, we
  refer to Hare \& Roginskaya \cite{hare} for results in this
  direction).
\item It would be interesting to understand the behavior of the Fourier-analytic quantity for other cutoff-values. Is it possible to prove bounds on
 $$ \sum_{\|k \| \leq X \atop k \neq 0}{ \left| \sum_{n=1}^{N}{ e^{2 \pi i \left\langle k, x_n \right\rangle}}\right|^2} \qquad \mbox{for arbitrary}~X?$$
We do not know what happens for $X \lesssim N^{\frac{1}{d}}$. For $X \gtrsim N^{\frac{1}{d}}$ it is likely that
 $$ \sum_{\|k \| \leq X \atop k \neq 0}{  \left| \sum_{n=1}^{N}{ e^{2 \pi i \left\langle k, x_n \right\rangle}}\right|^2} \sim N X^d$$
since this would be implied by the conjecture that the points are maximally separated $\|x_i - x_j\| \gtrsim N^{-1/d}$ (see \cite{stein1} for a proof of this implication).
\item Our argument is heavily based on properties of Fourier series while the underlying problem should actually display fairly universal behavior on
arbitrary compact manifolds. It could be interesting to see whether similar results 
hold true on the sphere $\mathbb{S}^{d-1}$, where spherical harmonics provide a fairly accessible function basis to work with. An encouraging Montgomery-type result
on the sphere was established by Bilyk \& Dai \cite{bilyk} (see also \cite{new}).
\end{enumerate}

\section{Proofs}

\begin{lemma}\label{lem:heatkernel} Let $0 < s < d$ be fixed. Then, for all $x,y \in \mathbb{T}^d$ and all $ t > 0$,
$$ \int_{\mathbb{T}^d}{\int_{\mathbb{T}^d}{    \frac{  \left[e^{t\Delta}\delta_x\right](a)  \left[e^{t\Delta}\delta_y\right](b) }{\| a - b \|^s} da db}} \leq \min \left\{ \frac{c_{d,s}}{\|x - y\|^s}, \frac{1}{\|x - y\|^s} + \frac{c_{d,s}' t}{\|x-y\|^{s+2}}\right\}.$$
\end{lemma}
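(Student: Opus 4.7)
The plan is to first reduce the double integral to a single smoothed integral via the semigroup property of the heat kernel. Writing $a = x + a'$, $b = y + b'$ and setting $z := x - y$, the integrand becomes $p_t(a') p_t(b') \|z + a' - b'\|^{-s}$ where $p_t = e^{t\Delta}\delta_0$; integrating out $b'$ with $c := a' - b'$ held fixed, and using $p_t * p_t = p_{2t}$ together with the symmetry $p_t(-v) = p_t(v)$, the double integral collapses to the single convolution
$$I(t, z) := \int_{\mathbb{T}^d} p_{2t}(c) \|z + c\|^{-s}\, dc = \bigl(e^{2t\Delta}\|\cdot\|^{-s}\bigr)(z).$$
Both claimed bounds are statements about this single function.

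For the uniform bound $I(t, z) \leq c_{d,s}\|z\|^{-s}$ I would exploit scale invariance. In $\mathbb{R}^d$ the substitution $c = \|z\| u$ yields $I(t, z) = \|z\|^{-s}\Psi(t/\|z\|^2)$ where $\Psi(\tau) := \int p_{2\tau}(u) \|u + \hat z\|^{-s}\, du$ and $\hat z$ is an arbitrary unit vector (the value is rotation-invariant). The function $\Psi$ is continuous on $[0, \infty]$ with $\Psi(0) = 1$, while $\Psi(\tau) \to 0$ as $\tau \to \infty$ because $\|p_{2\tau}\|_\infty \lesssim \tau^{-d/2}$ and $\|\cdot\|^{-s}$ is locally integrable for $s < d$; hence $\Psi$ is bounded by some $c_{d,s}$. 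The same scaling works on the torus whenever $\|z\|$ is small relative to the diameter, and for $\|z\|$ of order one the integral is trivially bounded.

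For the sharper bound $\|z\|^{-s} + c'_{d,s} t\|z\|^{-s-2}$ I would Taylor expand $c \mapsto \|z + c\|^{-s}$ around $c = 0$ and use that $p_{2t}$ is a symmetric probability density with second moment $\lesssim t$. Splitting the integral at $\|c\| = \|z\|/2$: on $\{\|c\| \leq \|z\|/2\}$, Taylor's theorem gives
$$\|z + c\|^{-s} = \|z\|^{-s} + c\cdot \nabla(\|\cdot\|^{-s})(z) + \tfrac{1}{2}\langle c, H(\xi) c\rangle,$$
where $\|H(\xi)\| \lesssim \|z\|^{-s-2}$ since $\|\xi\| \gtrsim \|z\|$ on this region. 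Integrating, the constant term contributes at most $\|z\|^{-s}$, the linear term vanishes by symmetry of $p_{2t}$, and the quadratic term is bounded by $\lesssim t \|z\|^{-s-2}$. On the tail $\{\|c\| > \|z\|/2\}$ I would use the Gaussian upper bound $p_{2t}(c) \lesssim t^{-d/2} e^{-\|c\|^2/(ct)}$ together with $e^{-u} \lesssim_k u^{-k}$ for arbitrarily large $k$; picking $k$ big enough absorbs the tail into $O(t\|z\|^{-s-2})$ provided $t \leq \|z\|^2$. In the remaining regime $t > \|z\|^2$ the second bound already follows from the first, since $c_{d,s}\|z\|^{-s} \leq c_{d,s} t \|z\|^{-s-2}$ in that range.

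The main obstacle is the tail $\{\|c\| > \|z\|/2\}$ in the Taylor step: the Riesz kernel has a singularity at $c = -z$ that threatens to spoil the estimate, and only the super-polynomial decay of the heat kernel saves the bound, once one trades powers of $t/\|z\|^2$ against $e^{-\|z\|^2/(ct)}$ correctly. A secondary technical nuisance specific to the torus is that via Poisson summation $p_{2t}$ is the Euclidean Gaussian plus periodic copies; the copies only contribute at scales $\|c\| \gtrsim 1$, where both bounds are trivially satisfied, and can be absorbed into the constants $c_{d,s}$ and $c'_{d,s}$.
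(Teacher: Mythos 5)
Your proposal is correct, and its skeleton coincides with the paper's: you collapse the double integral to the single convolution $(e^{2t\Delta}\|\cdot\|^{-s})(z)$ with $z = x-y$ (you via the semigroup identity $p_t * p_t = p_{2t}$ and symmetry, the paper via commuting Fourier multipliers --- the same computation), and you obtain the uniform bound $c_{d,s}\|z\|^{-s}$ by rescaling to a profile function $\Psi(t/\|z\|^2)$ that is finite, equals $1$ at $0$ and decays at infinity, exactly as the paper does with its substitution $t^*$. Where you genuinely add something is the second member of the minimum, $\|z\|^{-s} + c'_{d,s}\, t\, \|z\|^{-s-2}$: the paper's written proof stops after asserting boundedness of the rescaled profile, which yields only the first bound; the refinement $\Psi(\tau) \leq 1 + C\tau$ for $\tau \lesssim 1$, which is the Riesz-kernel analogue of condition \eqref{eq:condition3} and is what gets invoked later, is never argued there. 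Your Taylor step supplies exactly this: on $\{\|c\| \leq \|z\|/2\}$ the constant term contributes at most $\|z\|^{-s}$ (with coefficient exactly $1$, as the statement requires, since $p_{2t}$ has mass at most $1$ on the ball), the linear term vanishes by symmetry of both the density and the region, and the Hessian bound $O(\|z\|^{-s-2})$ against the second moment $O(t)$ gives the correction term; on the tail your mechanism of trading $e^{-\|z\|^2/(Ct)}$ for powers of $t/\|z\|^2$ is sound --- near the singularity $c \approx -z$ one integrates $\|z+c\|^{-s}$ over a ball of radius $\|z\|/2$ to get $\|z\|^{d-s}$, and $t^{-d/2} e^{-\|z\|^2/(Ct)} \|z\|^{d-s} \lesssim t \|z\|^{-s-2}$ upon taking $k = d/2 + 1$ in $e^{-u} \lesssim_k u^{-k}$, while the remaining tail mass costs only a factor $t/\|z\|^2$ against $\|z\|^{-s}$. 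Your observation that the complementary regime $t > \|z\|^2$ is absorbed by the first bound closes the case split correctly, and your treatment of the torus (periodized Gaussian, with the extra copies contributing $O(e^{-c/t})$ at scales where both bounds are trivial) matches the paper's transplantation remark. In short: same route for the reduction and the first bound, plus a complete proof of the second bound that the paper's own proof leaves implicit.
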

\begin{proof} We prove the result on Euclidean space $\mathbb{R}^d$, the result on the Torus then follows by transplantation for short times and is easily seen to be true
for large times (for large times, the left-hand side converges to a universal constant). On $\mathbb{R}^d$, we can make explicit use of the fact that convolution with the
Riesz potential $\|x\|^{-s}$ is a Fourier multiplier $R$. Fourier multipliers commute and thus
\begin{align*}
  \int_{\mathbb{R}^d}{\int_{\mathbb{R}^d}{    \frac{  \left[e^{t\Delta}\delta_x\right](a)  \left[e^{t\Delta}\delta_y\right](b) }{\| a - b \|^s} da db}} &=   \int_{\mathbb{R}^d}{ \left(\int_{\mathbb{R}^d}{ \frac{  \left[e^{t\Delta}\delta_x\right](a) }{\| a - b \|^s}  da } \right)\left[e^{t\Delta}\delta_y\right](b) db}\\
&= \left\langle R e^{t\Delta} \delta_{x}, e^{t\Delta} \delta_{y} \right\rangle \cr
  &= \left\langle R e^{2t\Delta} \delta_{0}, \delta_{y-x} \right\rangle
\end{align*}
This last expression is completely explicit, $e^{2t \Delta} \delta_0$ is a Gaussian centered in the origin to which the Riesz transform is applied; the result then follows from an explicit computation: we obtain
$$  \left\langle R e^{2t\Delta} \delta_{0}, \delta_{y-x} \right\rangle = \frac{1}{(8\pi t)^{d/2}} \int_{\mathbb{R}^d}{  \frac{ e^{-\frac{\|z\|^2}{8t}}}{ \|z - (y-x)\|^{s}} dz}.$$
We abbreviate the unit vector $w = (y-x)/\|y-x\|$ and argue that
$$ \int_{\mathbb{R}^d}{  \frac{ e^{-\frac{\|z\|^2}{8t}}}{ \|z - (y-x)\|^{s}} dz} = \frac{1}{\|y-x\|^{s}} \int_{\mathbb{R}^d}{  \frac{ e^{-\frac{\|z\|^2}{8t}}}{ \|\frac{z}{\|x-y\|} -w\|^{s}} dz}$$
We can rewrite this integral as
$$  \int_{\mathbb{R}^d}{  \frac{ e^{-\frac{\|z\|^2}{8t}}}{ \|\frac{z}{\|x-y\|} -w\|^{s}} dz} =  \int_{\mathbb{R}^d}{  \frac{ e^{-\|\frac{z}{\|x-y\|}\|^2 \frac{\|x-y\|^2}{8t}}}{ \|\frac{z}{\|x-y\|} -w\|^{s}} dz} =\|x-y\|^{-d}
  \int_{\mathbb{R}^d}{ \frac{e^{-\|s\|^2 \frac{ \|x-y\|^2}{8t}}}{ \|s -w\|^{s}} ds}.$$
Altogether, this implies, with the substitution $t^* = t \|x-y\|^2$ that
$$ \left\langle R e^{2t\Delta} \delta_{0}, \delta_{y-x} \right\rangle = \frac{1}{\|x-y\|^s}   \frac{1}{(8 \pi t^*)^{d/2}}\int_{\mathbb{R}^d}{ \frac{e^{- \frac{\|s\|^2}{8 t^*}} }{ \|s -w\|^{s}} ds}$$

We can use rotational invariance to assume that $w = (1,0,0,\dots,0)$.This turns the remaining expression into a function of $t^*$ which is finite for every $t^* \geq 0$,
converges to 0 as $t^* \rightarrow \infty$ and is thus bounded.
\end{proof}

$$$$

\subsection{Proof of Theorem 1}

\begin{proof}
We assume that the function $f$ satisfies an estimate
$$ \min_{x_1, \dots, x_N} \sum_{i \neq j}{f(x_i - x_j)} - N^2\int_{\mathbb{T}^d}{f(x) dx} \lesssim  N^{1+\frac{s}{d}}.$$
We will henceforth fix $N$ and let $\left\{x_1, \dots, x_N\right\} \subset \mathbb{T}^d$ be a point set for which this inequality is satisfied.
The first step of the argument consists in a mollification of the Dirac point masses: this leads to well-defined functions on which we can apply Fourier analysis. Since $\widehat{f}(k) \geq 0$,
we have that for any $x, y \in \mathbb{T}$ and any measure $\mu$ 
$$\left\langle e^{t \Delta} \mu, e^{t\Delta} (f* \mu) \right\rangle = \sum_{k \in \mathbb{Z}^d}{  e^{-2 t \|k\|^2}  \left| \widehat{\mu}(k) \right|^2 \widehat{f}(k)  }$$
is monotonically decaying in $t$. This suggests using the heat kernel as a mollifier. 
We observe that the self-interactions are at scale 
\begin{align*}
\left\langle e^{t \Delta} \delta_x, e^{t \Delta} (f * \delta_x) \right\rangle &= \left\langle e^{t \Delta} \delta_0, e^{t \Delta} (f * \delta_0) \right\rangle \\
&=  \sum_{k\in \mathbb{Z}^d}{ e^{- 2 t \|k\|^2} \widehat{f}(k)} \lesssim_{c_2} \sum_{k\in \mathbb{Z}^d}{ \frac{e^{- 2 t \|k\|^2}}{1+\|k\|^{d-s}}} \\
&\lesssim  1 + \sum_{1 \leq \|k\| \leq t^{-1/2}}{ \frac{1}{\|k\|^{d-s}}} +  \sum_{|k\| \geq t^{-1/2}}{ \frac{e^{- 2 t \|k\|^2}}{\|k\|^{d-s}}}\\
&\lesssim 1 + \int_{1}^{t^{-1/2}}{\frac{r^{d-1}}{r^{d-s}} dr} + \int_{t^{-1/2}}^{\infty}{\frac{ e^{-t r^2} r^{d-1}}{r^{d-s}} dr} \\
&\lesssim 1 + t^{-\frac{s}{2}} +  \int_{t^{-1/2}}^{\infty}{ e^{-t r^2} r^{s-1} dr}.
\end{align*}
This last integral can be bounded after substituting $x= t r^2$
\begin{align*}
  \int_{t^{-1/2}}^{\infty}{ e^{-t r^2} r^{s-1} dr} &= \int_{1}^{\infty}{ e^{-x} \left( \frac{x}{t} \right)^{\frac{s-1}{2}} \frac{dx}{2 \sqrt{x} \sqrt{t}}} \\
&= t^{-\frac{s}{2}}  \int_{1}^{\infty}{ e^{-x} x^{\frac{s}{2}-1} dx} \lesssim_{s} t^{-\frac{s}{2}}.
\end{align*}
Combining this with Lemma~\ref{lem:heatkernel} implies that, for all $ t > 0$,
\begin{align*}
\left\langle e^{t\Delta} \sum_{i=1}^{N}{\delta_{x_i} * f},  e^{t\Delta} \sum_{i=1}^{N}{\delta_{x_i}} \right\rangle &= 
N \left\langle e^{t\Delta}  \delta_0 * f,  e^{t\Delta} \delta_0 \right\rangle +   \sum_{i, j=1 \atop i \neq j}^{N}{ \left\langle e^{t\Delta}\delta_{x_i} * f,  e^{t\Delta}\delta_{x_j} \right\rangle}\\
&\lesssim N t^{-\frac{s}{2}} +  \sum_{i,j=1 \atop i \neq j}^{N}{f(x_i - x_j)} + c t \sum_{i,j=1 \atop i \neq j}^{N}{\left|  (\Delta f)(x_i - x_j)\right| }
\end{align*}

A simple application of the Fourier transform shows
\begin{align*}
  \sum_{i, j = 1}^{N}{\left\langle e^{t\Delta}\delta_{x_i} * f,  e^{t\Delta} \delta_{x_j} \right\rangle}  &= \sum_{k \in \mathbb{Z}^d}{ e^{-2t \|k\|^2}\left| \sum_{n=1}^{N}{ e^{2 \pi i \left\langle k, x_n \right\rangle}}\right|^2 \widehat{f}(k)} \\
&=  N^2 \int_{\mathbb{T}^d}{f(x) dx} + \sum_{k \in \mathbb{Z}^d \atop k \neq 0}{ e^{-2t \|k\|^2} \left| \sum_{n=1}^{N}{ e^{2 \pi i \left\langle k, x_n \right\rangle}}\right|^2 \widehat{f}(k)}.
\end{align*}
As a consequence, we have the estimate
$$ \sum_{k \in \mathbb{Z}^d \atop k \neq 0}{ e^{-2t \|k\|^2} \left| \sum_{n=1}^{N}{ e^{2 \pi i \left\langle k, x_n \right\rangle}}\right|^2 \widehat{f}(k)} \lesssim  N t^{-\frac{s}{2}} +  E + c t \sum_{i,j=1 \atop i \neq j}^{N}{\left|  (\Delta f)(x_i - x_j)\right| }.$$
We set $t = N^{-\frac{2}{d}}$ and obtain $N t^{-\frac{s}{2}}  \sim   N^{1+\frac{s}{d}}$. Moreover, we have $E \gtrsim N^{1+\frac{s}{d}}$ and thus
\begin{align*}
 E  + N^{-\frac{2}{d}} \sum_{i,j=1 \atop i \neq j}^{N}{\left|  (\Delta f)(x_i - x_j)\right| }&\gtrsim  \sum_{k \in \mathbb{Z}^d \atop k \neq 0}{ e^{-2 N^{-\frac{2}{d}} \|k\|^2} \left| \sum_{n=1}^{N}{ e^{2 \pi i \left\langle k, x_n \right\rangle}}\right|^2 \widehat{f}(k)} \\
&\geq  \sum_{\|k \| \leq N^{1/d} \atop k \neq 0}{ e^{-2 N^{-\frac{2}{d}} \|k\|^2}  \left| \sum_{n=1}^{N}{ e^{2 \pi i \left\langle k, x_n \right\rangle}}\right|^2\widehat{f}(k)} \\
&\gtrsim \frac{1}{N^{\frac{d-s}{d}}}  \sum_{\|k \| \leq N^{1/d} \atop k \neq 0}{  \left| \sum_{n=1}^{N}{ e^{2 \pi i \left\langle k, x_n \right\rangle}}\right|^2}.
\end{align*} 
A reformulation yields
$$ \sum_{\|k \| \leq N^{1/d} \atop k \neq 0}{  \left| \sum_{n=1}^{N}{ e^{2 \pi i \left\langle k, x_n \right\rangle}}\right|^2} \lesssim E N^{1 - \frac{s}{d}} + N^{\frac{d-(s+2)}{d}} \sum_{i,j=1 \atop i \neq j}^{N}{\left|  (\Delta f)(x_i - x_j)\right| }.$$
\end{proof}

\subsection{Proof of Theorem 2}
\begin{proof} We simplify the argument from \cite{stein1} for the convenience of the reader.
The proof proceeds along the same lines as before. It is to see, by considering a lattice, that the energy of any minimal-energy configuration is bounded from above by
$$  \sum_{i, j =1 }^{N}{ \exp\left( - N^{\frac{2}{d}} \|x_i - x_j\|^2 \right)}  \lesssim  N\sum_{j=1}^{\infty}{ j^{d-1} \exp(-j^2)} \lesssim N.$$
 We start by remarking that we can use the short-time asymptotic of the heat kernel to write
$$ \exp\left( - N^{\frac{2}{d}} \|x_i - x_j\|^2 \right) = (1+o(1))\pi^{d/2} \frac{1}{N} \left[e^{(N^{2/d}/4)\Delta}\delta_{x_i}\right](x_j).$$
This allows us to write
\begin{align*}
c N^2 &\gtrsim N\sum_{i,j = 1}^{N}\exp\left( - N^{\frac{2}{d}} \|x_i - x_j\|^2 \right) \gtrsim \sum_{i, j=1}^{N}  \left[e^{(N^{2/d}/4)\Delta}\delta_{x_i}\right](x_j)\\
&\gtrsim   \sum_{k \in \mathbb{Z}^d}{ e^{-\frac{4 \|k\|^2}{N^{2/d}}}  \left| \sum_{n=1}^{N}{ e^{2 \pi i \left\langle k, x_n \right\rangle}}\right|^2} \geq N^2 +  \sum_{k \neq 0}{ e^{-\frac{4 \|k\|^2}{N^{2/d}}}  \left| \sum_{n=1}^{N}{ e^{2 \pi i \left\langle k, x_n \right\rangle}}\right|^2}\\
&\gtrsim   \sum_{\|k \| \leq N^{1/d} \atop k \neq 0}{  \left| \sum_{n=1}^{N}{ e^{2 \pi i \left\langle k, x_n \right\rangle}}\right|^2}
\end{align*}
which is the desired result.
\end{proof}

\subsection{Proof of Corollary 2}
\begin{proof} We observe that
$$ d_{t, \Delta}(x) = \sum_{n=1}^{N}{ \left\langle e^{t \Delta} \delta_x, \delta_{x_n} \right\rangle} - N = \left\langle \delta_x,  e^{t \Delta}  \sum_{n=1}^{N}{  \delta_{x_n}}  \right\rangle - N$$
and thus, by taking the Fourier transform and Plancherel's theorem,
$$  \int_{\mathbb{T}^d}{ d_{t,\Delta}(x)^2 dx} = \sum_{k \in \mathbb{Z}^d \atop k \neq 0}{ e^{- t \|k\|^2} \left| \sum_{n=1}^{N}{ e^{2 \pi i \left\langle k, x_n \right\rangle}}\right|^2}.$$
For every set of points and every $t = N^{-2/d}$, we have
$$  \sum_{k \in \mathbb{Z}^d \atop k \neq 0}{ e^{- t \|k\|^2} \left| \sum_{n=1}^{N}{ e^{2 \pi i \left\langle k, x_n \right\rangle}}\right|^2} \gtrsim  \sum_{ \|k\| \leq N^{\frac{1}{d}} \atop k \neq 0}{  \left| \sum_{n=1}^{N}{ e^{2 \pi i \left\langle k, x_n \right\rangle}}\right|^2} \gtrsim N^2,$$
where the last step is Montgomery's Lemma. 
Conversely, for any set of points under consideration, we have
\begin{align*}
\int_{\mathbb{T}^d}{ d_{t,\Delta}(x)^2 dx} &\lesssim   \sum_{\|k\| \leq X \atop k \neq 0}{e^{-t \|k\|^2} \left| \sum_{n=1}^{N}{ e^{2 \pi i \left\langle k, x_n \right\rangle}}\right|^2} +   \sum_{\|k\| \geq X}{ e^{- t \|k\|^2} \left| \sum_{n=1}^{N}{ e^{2 \pi i \left\langle k, x_n \right\rangle}}\right|^2} \\
&\lesssim  X^{d-s}   \sum_{\|k\| \leq X \atop k \neq 0}{ e^{-t \|k\|^2}\left| \sum_{n=1}^{N}{ e^{2 \pi i \left\langle k, x_n \right\rangle}}\right|^2 \widehat{f}(k)}  +  N^2 \sum_{\|k\| \geq X}{ e^{- t \|k\|^2}}.
\end{align*}
This expression has two sums which we bound separately and by different means.\\

\textbf{First sum.}
The first sum is easy to bound from above by simply summing over all $k \in \mathbb{Z}^d, k \neq 0$ (i.e. forgetting about the restriction $\| k \| \leq X$. That full sum can be rewritten as
\begin{align*}
A = \left\langle e^{(t/2)\Delta} \sum_{i=1}^{N}{\delta_{x_i} * f},  e^{(t/2)\Delta} \sum_{i=1}^{N}{\delta_{x_i}} \right\rangle - N^2\int_{\mathbb{T}^d}{f(x) dx}.
\end{align*}
We shall separate self-interactions from other interactions and write
\begin{align*}
A &=  N \left\langle (e^{(t/2)\Delta} \delta_0)*f, e^{(t/2)\Delta} \delta_0 \right\rangle +   \sum_{i,j=1 \atop i \neq j}^{N}{ \left\langle e^{(t/2)\Delta}\delta_{x_i} * f,  e^{(t/2)\Delta} \delta_{x_j} \right\rangle}  - N^2\int_{\mathbb{T}^d}{f(x) dx}.
\end{align*}
The self-interactions were already computed in the proof of Theorem 1 and yield, for $t \gtrsim N^{-2/d}$,
$$N \left\langle (e^{(t/2)\Delta} \delta_0)*f, e^{(t/2)\Delta} \delta_0 \right\rangle \lesssim N^{1+ \frac{s}{d}}.$$
We shall bound the remaining terms by adding 0 in a suitable way
\begin{align*}
B &=   \sum_{i,j=1 \atop i \neq j}^{N}{ \left\langle e^{(t/2)\Delta}\delta_{x_i} * f,  e^{(t/2)\Delta} \delta_{x_j} \right\rangle} -   \sum_{i,j=1 \atop i \neq j}^{N}{ \left\langle \delta_{x_i} * f,  \delta_{x_j} \right\rangle} \\
&+ \sum_{i,j=1 \atop i \neq j}^{N}{ \left\langle \delta_{x_i} * f,  \delta_{x_j} \right\rangle} - N^2\int_{\mathbb{T}^d}{f(x) dx}.
\end{align*}
Assumption \ref{eq:condition2} implies that the second difference is bounded from above
$$  \sum_{i,j=1 \atop i \neq j}^{N}{ \left\langle \delta_{x_i} * f,  \delta_{x_j} \right\rangle} - N^2\int_{\mathbb{T}^d}{f(x) dx} \leq c_3 N^{1 + \frac{s}{d}}.$$
It remains to deal with the first term which we do on a term-by-term basis; it suffices to bound, for $x,y \in \mathbb{T}$,
$  \left\langle e^{(t/2)\Delta}\delta_{x} * f,  e^{(t/2)\Delta} \delta_{y} \right\rangle - f(x-y)$
or, using translation invariance, for an arbitrary $x \in \mathbb{T}$
$$  \left\langle e^{(t/2)\Delta}\delta_{0} * f,  e^{(t/2)\Delta} \delta_{x} \right\rangle - f(x) = \int_{\mathbb{T}^d}{ (f(x + y) - f(x))  \left[e^{t\Delta}\delta_0\right](y) dy}.$$
If $\|x\| \gtrsim \sqrt{t}$, then this sum is fairly easy to bound via a Taylor expansion and we obtain
$$  \int_{\mathbb{T}^d}{ (f(x + y) - f(x))  \left[e^{t\Delta}\delta_0\right](y) dy} \leq  t\left| \Delta f(y)\right| \lesssim \frac{t}{\|y\|^{s+2}}.$$
If $\|x\| \lesssim \sqrt{t}$, then Lemma 1 implies
$$  \int_{\mathbb{T}^d}{ (f(x + y) - f(x))  \left[e^{t\Delta}\delta_0\right](y) dy} \leq \int_{\mathbb{T}^d}{ f(x + y)   \left[e^{t\Delta}\delta_0\right](y) dy} \lesssim \frac{1}{\|x\|^s} \lesssim \frac{t}{\|x\|^{s+2}}.$$

 \textbf{Second Sum.} We can estimate the second sum using the incomplete gamma function 
\begin{align*}
\sum_{\|k\| \geq X}{ e^{- t \|k\|^2}} \lesssim \int_{X}^{\infty}{ e^{-t r^2} r^{d-1}dr} \lesssim_d  t^{-\frac{d}{2}}\Gamma\left(\frac{d}{2}, X^2 t\right)
\end{align*}
and a simple asymptotic estimate, valid for $X \gtrsim t^{-1/2}$, simplifies this to
$$ t^{-\frac{d}{2}}  +  t^{-\frac{d}{2}}\Gamma\left(\frac{d}{2}, X^2 t\right) \lesssim   X^{d-2} t^{-1} e^{- X^2 t}.$$

\textbf{Conclusion.}
Altogether, we have established that
$$ \int_{\mathbb{T}^d}{ d_{t,\Delta}(x)^2 dx} \lesssim_d   X^{d-s} \left(  t \sum_{i \neq j}{\frac{1}{\|x_i - x_j\|^{s+2}}} + N^{1+\frac{s}{d}} \right) +  X^{d-2} t^{-1} N^2 e^{- X^2 t^{}}.$$
We now substitute $X =c t^{-1/2} \log{N}$  for some constant $c \gg 1$. This ensures, combined with $t \gtrsim N^{1/2}$ that the second terms is smaller than the first term and thus
$$ \int_{\mathbb{T}^d}{ d_{t,\Delta}(x)^2 dx} \lesssim_d  \frac{(\log{N})^{d-s}}{t^{\frac{d-s}{2}}} \left(  t \sum_{i \neq j}{\frac{1}{\|x_i - x_j\|^{s+2}}} + N^{1+\frac{s}{d}} \right)$$
\end{proof}

\section*{Compliance with Ethical Standards}
The authors are not aware of any potential conflicts of interest. This
work did not involve any human data or animal participation. We comply
with other ethical standards of the journal.

\end{document}